\def\Tr{\operatorname{Tr}}
\def\trnsfrm#1{\mathcal #1}
\def\tA{\trnsfrm A}\def\tB{\trnsfrm B}\def\tC{\trnsfrm C} 
\def\tI{\trnsfrm I}\def\tT{\trnsfrm T}
 \def\tT{\trnsfrm T} 
\def\rA{{\rm A}}\def\rB{{\rm B}}\def\rC{{\rm C}} \def\rE{{\rm E}} 
\def\rI{{\rm I}}
\def\rX{{\rm X}}\def\rY{{\rm Y}} 
\def\sH{{\mathcal{H}}}
\def\St{\rm{St}}\def\Eff{\mathrm{Eff}}\def\Trn{\mathrm{Transf}}\def\T{\mathrm{T}}\def\Bnd{\mathrm{Bnd}}
\def\Cmplx{\mathbb{C}}
\begin{document}
\newtheorem{theorem}{Theorem}[section]
\newtheorem{corollary}[theorem]{Corollary}
\newtheorem{definition}[theorem]{Definition}
\newcommand*{\doi}[1]{\href{http://dx.doi.org/#1}{doi: #1}}

\title{Causality re-established}
\author{Giacomo Mauro D'Ariano$^{1,2}$}

\address{$^{1}$Dipartimento di Fisica, Universit\`a degli Studi di Pavia, via Bassi 6, 27100 Pavia,\\
$^{2}$ INFN, Gruppo IV, Sezione di Pavia}

\subject{Physics and Philosophy}

\keywords{causality\\ operational theories\\ algorithmic paradigm\\informationalism}

\corres{Giacomo Mauro D'Ariano\\
\email{dariano@unipv.it}}
\begin{abstract}
Causality never gained the status of a ``law'' or ``principle'' in physics. Some recent literature even popularised the false idea that causality is a notion that should be banned from theory. Such misconception  relies on an alleged universality of reversibility of  laws of physics, based either on determinism of classical theory, or on the multiverse interpretation of quantum theory, in both cases motivated by mere interpretational requirements for realism of the theory. 

Here I will show that a properly defined unambi-guous notion of causality is a theorem of quantum theory, which is also a falsifiable proposition of the theory. Such causality notion appeared in the literature within the framework of operational probabilistic theories. It is a genuinely theoretical notion, corresponding to establish a definite partial order among events, in the same way as we do by using the future causal cone on Minkowski space. 

The causality notion is logically completely indepen-dent on the misidentified concept of  ``determinism'', and, being a consequence of quantum theory, is ubiquitous in physics. In addition, since classical theory can be regarded as a restriction of quantum theory, causality holds also in the classical case, although the determinism of the theory trivialises it.

I then conclude  arguing that causality naturally establishes an arrow of time. This  implies that the  scenario of the ``Block Universe'' and the connected ``Past Hypothesis'' are incompatible with causality, and thus with quantum theory: they both  are doomed to remain mere interpretations and, as such, not falsifiable, similarly to the  hypothesis of ``super-determinism''.
\end{abstract}
\begin{fmtext}
The present opinion paper should be regarded as a manifesto for more in-depth investigations for such a relevant notion in sciences.
\end{fmtext}
\maketitle
\section{Causality, the Cinderella of Physics}

Causality has been always an undetermined and controversial notion in physics, perhaps due to its involvement in a wide spectrum of heterogeneous disciplines, including all natural sciences, pure philosophy, economics, and law.\footnote{The literature on causality is very extensive, due to the number of disciplines in which it is involved. Perhaps the most natural connection between concepts of causality in different branches of knowledge is the one at the borderline between physics and philosophy, starting from the early work of Aristotle, up to the Ren\'ee Descartes, who broke the ground for the modern view of David Hume and Immanuel Kant. A modern philosopher who focused on causality was Wesley Salmon, who made several attempts to discriminate ``cause'' from ``effect'' in physical causation, without relying on the arrow of time (see e.g. his book \cite{salmon1998causality} and his articles in Ref. \cite{sosa1993causation}).  Later Phil Dowe devoted his book \cite{dowe2007physical} to a critical analysis of all such attempts, and proposed alteratives. In all these studies an hidden issue is the {\em objectivity} of the assessment of which is the cause and which the effect, whereas, as shown in the present paper, causality is a genuinely {\em theoretical} notion, and depends on our theory of connections among events. The causality notion is also affected by the same objectivity bias about the same of probability notion when causality is considered in a probabilistic context. In this sense it is also the core of Hume's problem of induction. Additional  interesting recent books on the subject of causality are Refs. \cite{bunge1979causality,leung2002causation,cartwright2007hunting,harriman2010the}.}
The controversial status of causality in modern physics is witnessed by some mainstream divulgations, e.g. the recent book \cite{bigpict} of Sean M. Carrol, where on pag. 62 he discredits the notion of causality  in physics with the following  lines
\\\par {\em \ldots we highlighted how Laplace's conservation of information undermines the central role that Aristotle placed on causality. Concepts like  ``cause''  appear nowhere in Newton's equations, nor in our more modern formulations of the laws of nature. But we can't deny that the idea of one event being caused by another is very natural, and seemingly a good fit to how we experience the world. }
 \\\par And then, to add credit to his assertion, in the following page Carrol quotes the famous passage by Bertrand Russel \cite{Russel:1912}:
 \\\par {\em The law of causality, I believe, like much that passes muster among philosophers, is a relic of a bygone age, surviving, like the monarchy, only because it is erroneously supposed to do no harm.}
 \\\par  
Carrol purports to convince the laymen that causality, although used at any instant of our everyday life, nevertheless it is not a law of nature, but only a practical idea, and it  is obsolete as a theoretical tool. On the same lines is the article by J. D. Norton paper, whose title ``Causation as folk science'' technically means that theorising in terms of {\em cause and effects} is a very primordial way of reasoning in science, however, it does not correspond to a precise law or principle in physics at the very fundamental level (see also Ref. \cite{anti-fund}). This is evident in the following lines by Norton:
\\\par  {\em \ldots the concepts of cause and effect are not the fundamental concepts of our science and that science is not governed by a law or principle of causality \cite{folksci}.} 
\\\par
The misidentification between causal reasoning and the existence of a causality principle seems to be the reason why causality has remained in the realm of philosophy, and never achieved the status of  ``principle''. Nevertheless, causality creeps in the physical theories in the form of {\em ad hoc} assumptions based on empirical evidences, as when advanced potentials are discarded in electrodynamics, or when we motivate the Kramers-Kronig relations. In other cases, causality is embodied in the interpretation of the theory, as is the case of Special Relativity. Or else, causality is so naturally embedded in our theoretical understanding, that its notion remains unconsciously implicit even in recent axiomatised quantum foundations, as it happened in the seminal paper  by  Lucien Hardy \cite{hardy2001}, which inspired the informational derivation of quantum theory \cite{purification,QUIT-Arxiv,CUPDCP} of which Causality is now an axiom.\footnote{Later in Ref. \cite{Hardy:2011un} Hardy recognised the role of the causality principle of Refs. \cite{QUIT-Arxiv,CUPDCP}.}

It should be noted that the denial of existence of a principle of causality is generally motivated within a classical theoretical view, as it is also evident in the above Carrol's citation. As explained in his book \cite{bigpict}, the reason beneath such denial within classical physics is the time reversal symmetry of  Newton's and Maxwell's laws. A possible source of misunderstanding here may be the implicit  assumption that the absence of a law would constitute a law by itself. However, both Newton's and Maxwell's theories are compatible with an additional law that picks a preferred direction of time, since one may equally state the same laws in integral form by means of the retarded potentials only, and using them as the law formulation, from which Newton's and Maxwell's original laws can be derived by differentiation, however, with the additional time-arrow restriction. 

A second reasoning underlying the denial of a causality principle, again of a classical-theoretical origin, is the same fact that classical mechanics is deterministic, and the one-to-one correspondence between cause and effect without assuming the arrow of time (the effect occurs after the cause) puts the two notions in one-to-one correspondence, making them interchangeable. Such reasoning is also the origin of the traditional misidentification of the notion of causality with that of determinism--whereas, instead, the two are logically utterly independent.\footnote{The misidentification of the two notions of ``causality'' and ``determinism'' is  hidden ubiquitously in the literature. For example, it is the reason why the de Broglie-Bohm pilot-wave theory is also called {\em causal interpretation of quantum mechanics}. The logical independence between the two notions of causality and determinism is proved by the existence of causal theories that are not deterministic (e.g. quantum theory), and viceversa of deterministic theories that are  not causal (e.g. the OPT presented in\cite{d2013determinism}). (In a probabilistic context ``determinism'' is identified with the tautological property of a theory of having all probabilities of physical events being equal to either zero or one, which is clearly a definition with no causal connotation. Originally, the notion of determinism arose within the clockwork-universe vision of classical mechanics, assessing that the state of a system at an initial time completely determines the state at any later time.)

While in classical theory the notions of causality and determinism degenerate in a conceptual overlapping, they completely disentangle in quantum theory (and, more generally, in an OPT \cite{purification,QUIT-Arxiv}). This is due to the fact that classical mechanics identifies the {\em state} (point in the phase-space) with the {\em measurement-outcome}, while the two concepts are radically different in quantum theory, and more generally in OPT's, allowing us to define determinism outside the framework of classical mechanics which is already deterministic.} And, indeed, the two notions are so deeply entangled in some literature, that they are often merged into the commonplace of {\em causal determinism}. An example of such identification is the quotation by the founding father of quantum theory, Max Planck:
 \\\par\emph{An event is causally determined if it can be predicted with certainty \cite{planck1941kausalbegriff}.} 
\\\par The confusion between causality and determinism is notoriously the main source of the common misinterpretation of perfect EPR quantum correlations (e.g. from a singlet), that originated the Einstein's motto ``spooky action at a distance'', whose literal interpretation corresponds to the subliminal interpretation of the correlations being causal.
\\\par The advent of Quantum Mechanics has led us to contemplate causal relations as generally probabilistic, as it is the case in all natural and human sciences. Although a probabilistic context for physical causation had already been considered by several authors \cite{salmon1998cauality,dowe2007physical,Pearl:2012},
the precise mathematical formulation of causality has been given only recently in the context of {\em operational probabilistic theories} (OPT) \cite{purification} (see Sect. \ref{s:causal}). Such formulation covers all possible instances of the notion, and corresponds to  set up a partial ordering among events, thus requiring a network description, which is embedded in the operational framework. The OPT formulation is perfectly compliant with the historical philosophical concept of causality since David Hume, along with its usage in common reasoning, inference, and modelling in sciences \cite{Pearl:2012}. From its OPT formulation we can also appreciate that, being an ingredient of a probabilistic logical inference approach,   causality is a genuinely theoretical notion, and this simple fact closes all debates about ``objective'' identification of causes versus effects without assuming the arrow of time, as in the case in the Phil Dowe analysis \cite{dowe2007physical} of Wesley Salmon attempts\cite{salmon1998causality,sosa1993causation}. Even the same issue of Hume's induction corresponds to such misinterpretation of causality as an objective notion, whereas, instead, causality  is a purely theoretical ingredient, and, being probabilistic, it depends on prior assumptions.\footnote{In this sense the criticism of Bertrand Russell is pertinent to such objective formulations, along with the purely deterministic notion, and all other criticisms are completely overcome by the OPT formulation.}

In the next section we will see how causality is a theorem of quantum theory, and, as such, is a falsifiable proposition of the theory. We will also see how the present notion of causality is equivalent to the Einsteinian notion, and distills all the guises in which causality appears in physics.

\section{Causality is a theorem of Quantum Theory}\label{s:causal}
Quantum theory is an operational probabilistic theory (OPT) \cite{CUPDCP}. An OPT is a theory that can predict joint probability distributions of multiple events depending on a graph of connections between them. The theory associates mathematical structures to both  ``nodes'' and ``links'' of the graph, which are {\em transformations} and the {\em systems} of the theory, respectively.  

We will now report a presentation of quantum theory as OPT based on von the Neumann axiomatisation, and then state the causality principle, proving it as a theorem of the theory. Then, we will see how the principle corresponds to set up a partial ordering among events, and it is equivalent to the customary causality notion of special relativity. Finally, we will establish the scientific value of the principle on the basis of its falsifiability. 

\subsection{Quantum Theory as operational probabilistic theory}
Quantum theory is an operational probabilistic theory (OPT), namely a theory that predicts joint probability distributions of events depending on  a graph of connections among them. 
The graph is what is generally referred to as ``quantum circuit'' in the quantum information literature. It is a DAG (acronym for ``directed acyclic graph''), here conventionally directed from the left to the right. An example of DAG represented as a customary quantum circuit is given in Fig. \ref{f:DAG}. The nodes of the graph are the {\em events} (in figure denoted as $\Psi_i,\tA_j,\ldots$). The links of the graph are the {\em systems} of the theory (denoted as $\rA, \rB,\ldots$). The collection of all possible alternative events ${\tA_j}$ with overall unit probability is called {\em test}, and denoted as $\{\tA_j\}_{j\in\rX}$ with $\rX$ (finite countable) probability space.\footnote{For those not familiar with quantum information, I would emphasise that the graph description can equally represent an experiment made in the lab, or a ``natural process'', e.g. corresponding to a Feynman diagram.} By definition the test is complete if the coarse-grained event $\tA_\rX:=\sum_{j\in\rX}\tA_j$ is deterministic, namely it occurs with unit marginal probability.\footnote{A deterministic transformation is called {\em quantum channel} in the quantum information literature.}

According to the graph orientation, the wires on the left side of an event box are the {\em input systems}, those on the right side the {\em output systems}. There are special events with trivial input system, called {\em preparations} (represented by boxes with the left rounded side), and events with trivial output system, called {\em observations} (represented by boxes with the right rounded side). When needed we will denote the trivial system by $\rI$. The complete circuit is therefore a closed DAG, as in Fig. \ref{f:DAG}. 
\begin{figure}[h]
\begin{center}
\includegraphics[width=.5 \textwidth]{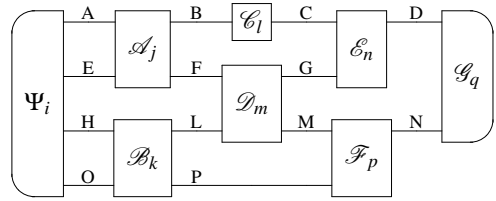}\\
\caption{A closed quantum circuit representing theoretical connections between events. The circuit nodes (depicted as boxes) represent events. The circuit links represent systems of the theory. The graph is a closed directed acyclic graph (DAG), undirected from left to right (input systems for each event are on the left of the box, output systems on the right). The joint probability of all events depends parametrically on the graph. The same graph can be used to represent the complete joint test, upon substituting each event, say ${\tA_j}$ with the test 
$\{\tA_j\}_{j\in\rX}$ with probability space $\rX$.}\label{f:DAG}
\end{center}
\end{figure}
             
We can compose events in parallel and in sequence to built up composite events, as we normally do with quantum circuits, and contextually we  compose systems shared by a sequence of two events, e.g. systems $\rA$ and $\rE$ in Fig. \ref{f:DAG} into the composite system $\rA\rE$ in Fig.\ref{f:DAG}.\footnote{For a detailed axiomatic formulation of composition rules of events and systems see Ch. 4 of  book \cite{CUPDCP}.}
\par We now focus only on the probabilistic nature of the event,  abstracting from its specific instance, and take the equivalence class of events that occurs with the same joint probability in all possible circuits. We call such equivalence class {\em transformation}, and for preparation events we call it {\em state}, and for observation events we call it {\em effect}. By definition a state followed at the output by a transformation is a new state, and similarly an effect preceded at its input by a transformation is a new effect. In this way we can easily understand that every closed circuit can be regarded as the composition of a state with an effect (or a composition of a preparation with an observation): see Fig.  \ref{f:EQU}.
\begin{figure}[h]
     \centering
    \begin{subfigure}[t]{0.47\textwidth}
        \raisebox{0pt}{\includegraphics[width=\textwidth]{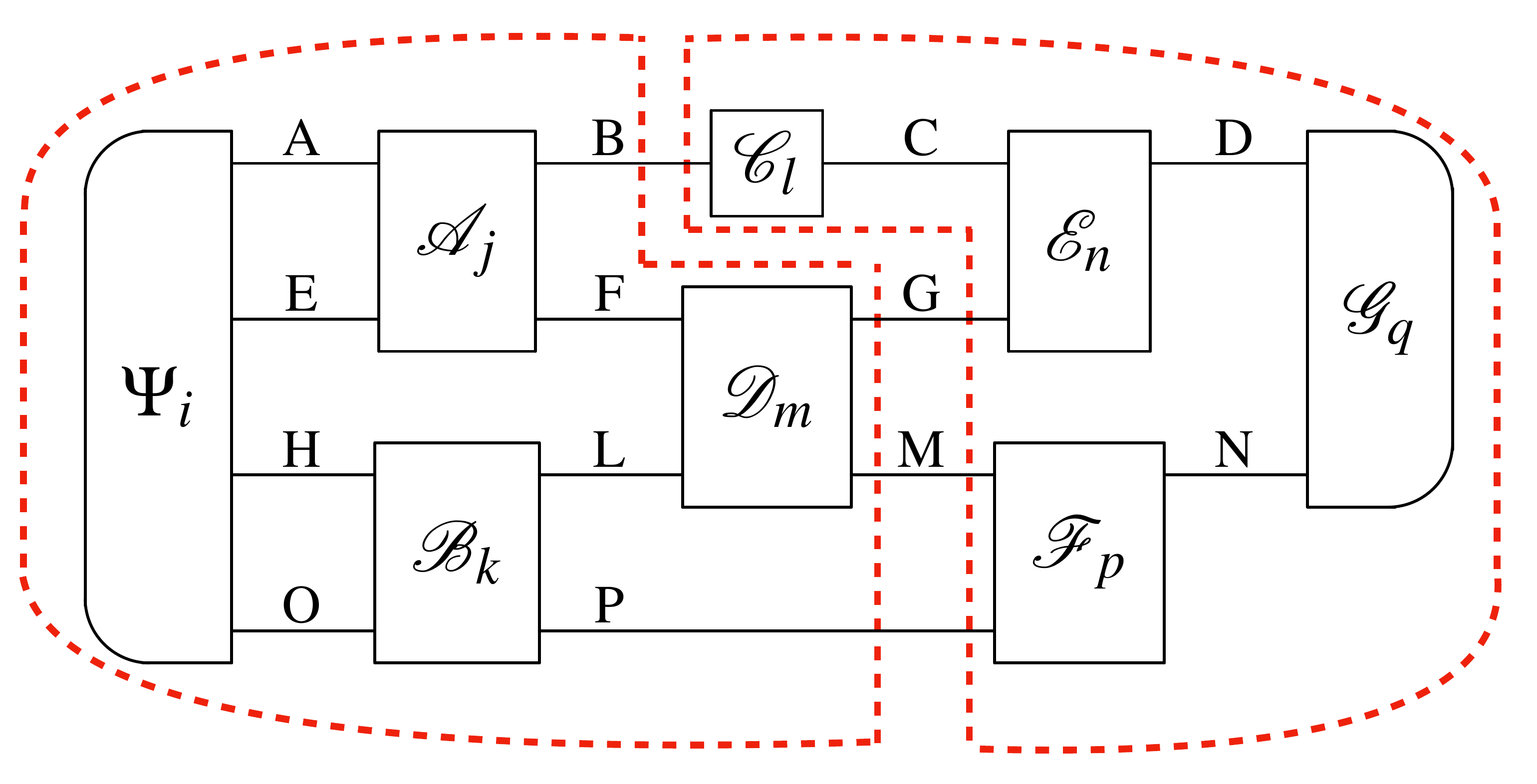}}
    \end{subfigure}
    \hfill
    \begin{subfigure}[t]{0.10\textwidth}  \raisebox{40pt}{\huge{$\equiv$}}
    \end{subfigure}
        \hfill
    \begin{subfigure}[t]{0.41\textwidth} \hspace{-3em}
        \raisebox{33pt}{\includegraphics[width=\textwidth]{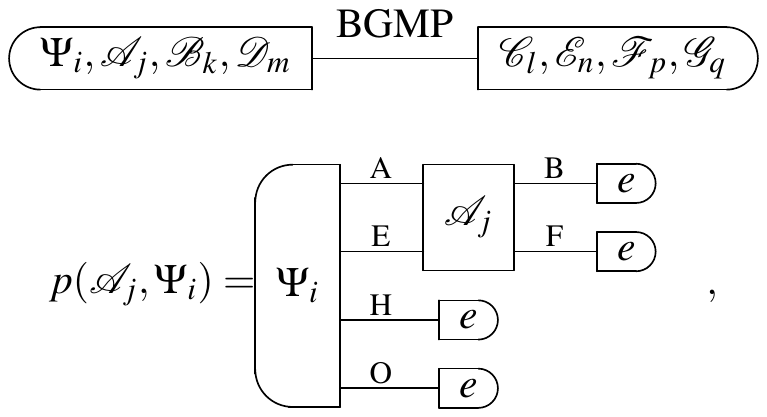}}
    \end{subfigure}
    \caption{Every closed circuit is equivalent to composing a preparation with an observation.}
    \label{f:EQU}
\end{figure}

\bigskip
A complete formulation of an OPT is given by providing the mathematical description of its systems and its transformations. Within the OPT framework, the von Neumann axiomatization of quantum theory\footnote{Here we are considering the quantum theory of abstract systems, without the original mechanical connotation of the theory, in which case the theory is usually referred to as {\em Quantum Mechanics}.} can be formulated as follows:\footnote{We use the following common notation: $\Bnd(\sH)$  ($\Bnd^+(\sH)$) denotes the set of bounded (positive) operators on the Hilbert space $\sH$, $T(\sH)$ ($\T^+(\sH)$) denotes the set of (positive) trace-class operators over $\sH$, $\T_{\leq1}^+(\sH)$ denotes the set of positive trace-class operators $\rho$ over $\sH$ with $\Tr[\rho]\leq 1$, $\Trn(\rA\to\rB)$ is the set of transformations from system $\rA$ to system $\rB$,  $\St(\rA)$ is the set of states of system $\rA$,  $\Eff(\rA)$ its set of effects. All maps in the following are linear.}
\begin{itemize}
\item[] {\bf Quantum Theory:} each {\em system} $\rA$ is associated to an Hilbert space $\sH_\rA$, and for the trivial system one has $\sH_\tI=\Cmplx$. A {\em transformation} $\tT\in\Trn(\rA\to\rB)$ is described by a completely positive (CP) trace-non-increasing map from $\T(\sH_\rA)$ to $\T(\sH_\rB)$, the transformation being deterministic when the map is trace-preserving. 
\end{itemize}
As mentioned, states and effects are special types of transformations, having trivial input system and trivial output system, respectively. We then have 
\begin{enumerate}
\item The set of states of system $\rA$ is $\St(\rA):=\Trn(\rI\to\rA)$. It follows that the states of $\rA$ are represented by positive maps\footnote{For trivial input or output system the CP map is simply a positive map.} from $[0,1]$ to $\T_{\leq1}^+(\sH_\rA)$,  the deterministic states corresponding to unit trace. In particular $\St(\rI)\equiv\T_{\leq1}^+(\Cmplx)\equiv[0,1]$ are probability values. 
\smallskip
\item The set of effects of system $\rA$ is $\Eff(\rA)=\Trn(\rA\to\rI)$. It follows that the effects of $\rA$ are positive maps from $\T^+_{\leq 1}(\sH_\rA)$ to $[0,1]$, hence they are functionals of the form $\varepsilon_i(\cdot)=\Tr_\rA[\cdot E]$ where $\Tr_\rA$ denotes the partial trace over $\sH_\rA$ and $\Bnd^+(\sH_\rA)\ni E\leq I_\rA$. $\Tr_\rA$ is the only deterministic effect of $\rA$.
\end{enumerate}
\def\dim{\operatorname{dim}}
This is all what we need to know about quantum theory. 

We need now to remark that classical theory can also be formulated as an OPT \cite{CUPDCP},  and it is a restriction of quantum theory to maximal simplexes of states contained in the quantum convex sets of states, and restricting transformations accordingly.\footnote{Classical theory reformulated as an OPT is the restriction of quantum theory, in which the set of states $\St(\rA)$ is limited to the simplex $\mathsf{S}^{\dim(\sH_A)}$ that is the convex hull of a fixed maximal set of orthogonal pure quantum states, and restricting also transformations accordingly. It is easy to see that the theory can be formulated as follows
\begin{itemize}
\item[] {\bf Classical Theory:} It is the OPT restriction of quantum theory corresponding to choosing $\Trn(\rA\to\rB)=\mathsf{T}(n_\rA, n_\rB)$ the set of $n_\rA\times  n_\rB$ sub-stochastic Markov matrices, with $n_\rA=\dim(\sH_\rA)$.
\end{itemize}
}

\subsection{The causality theorem}\label{s:causal2}
For an OPT the causality principle is stated as follows \cite{CUPDCP}:

\begin{itemize}
\item[]{\bf Causality principle:} The probability of  preparations is independent on the choice of observations.
\end{itemize}

We now show that the causality principle is a theorem of quantum theory.

\begin{theorem}\label{the:QTcausal}
Quantum theory satisfies the causality principle.
\end{theorem}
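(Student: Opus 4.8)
The plan is to unpack the causality principle into its concrete quantum-mechanical meaning and then verify it by a single trace identity, the real work being a uniqueness statement about the deterministic effect. First I would reduce a general circuit to canonical form: by the equivalence of Fig.~\ref{f:EQU}, a preparation followed by any downstream observation is a state $\rho_i\in\St(\rA)$ composed with an effect on the same system $\rA$. Concretely I would fix a preparation test $\{\rho_i\}_{i\in\rX}$ on $\rA$ and an observation test $\{\varepsilon_j\}_{j\in\rY}$ with $\varepsilon_j(\cdot)=\Tr_\rA[\,\cdot\,E_j]$ and $0\le E_j\le I_\rA$, and record the joint outcome probability as $p(i,j)=\Tr[\rho_i E_j]$.

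The key step is then to compute the marginal probability of the preparation outcome $i$ by coarse-graining over all observation outcomes $j$:
\[
p(i)=\sum_{j\in\rY}p(i,j)=\sum_{j\in\rY}\Tr[\rho_i E_j]=\Tr\Big[\rho_i\sum_{j\in\rY}E_j\Big].
\]
Here I would invoke the definition of a complete test: since $\{\varepsilon_j\}_{j\in\rY}$ is complete, its coarse-graining $\sum_j\varepsilon_j$ is deterministic, and the excerpt records that $\Tr_\rA$ is the \emph{unique} deterministic effect of $\rA$, which forces $\sum_{j\in\rY}E_j=I_\rA$. Substituting yields $p(i)=\Tr[\rho_i I_\rA]=\Tr[\rho_i]$, manifestly independent of the chosen observation test. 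This is exactly the assertion that the probability of preparations does not depend on the choice of observations.

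The main obstacle is not the algebra but pinning down \emph{why} the coarse-grained effect of every complete observation test must equal $I_\rA$, i.e. the uniqueness of the deterministic effect; this is where the specific quantum structure enters rather than some generic OPT property. A deterministic effect is a trace-preserving map into $\T(\Cmplx)=\Cmplx$, that is, $\rho\mapsto\Tr[\rho E]$ subject to $\Tr[\rho E]=\Tr[\rho]$ for all $\rho\in\T^+_{\le1}(\sH_\rA)$; by nondegeneracy of the trace pairing this yields $E=I_\rA$, and hence such an effect is unique. I would isolate this as a one-line lemma, since it is the genuine content that makes marginalisation over observations well defined and observation-independent, the remainder being the trace identity above.

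Finally I would remark that the same computation covers the general network case: any intermediate deterministic transformation inserted between preparation and observation is a trace-preserving CP map, so it can be absorbed into the downstream coarse-grained effect without altering the trace, and the marginal of the preparation still collapses to $\Tr[\rho_i]$ regardless of what is placed after it.
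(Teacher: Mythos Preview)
Your proposal is correct and follows essentially the same route as the paper: invoke uniqueness of the deterministic effect to obtain $\sum_{j}E_j=I_\rA$, then compute the marginal $\sum_j p(i,j)=\Tr[\rho_i]$, which is manifestly observation-independent. You actually add value by supplying the one-line justification for that uniqueness (nondegeneracy of the trace pairing) and by noting the reduction of a general network to the canonical preparation--observation form, both of which the paper takes for granted.
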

\begin{proof}
Since $\Tr_\rA$ is the unique deterministic effect for $\rA$, one has $\sum_{j\in\rY}\{E_j\}=I_\rA$ for every observation test. The joint probability of preparation and observation is given by 
\begin{equation}
p(i,j):=\Tr[E_j\rho_i],
\end{equation}
and the marginal probability of preparation is 
\begin{equation}\label{maracas}
\sum_{j\in\rY}p(i,j)=\sum_{j\in\rY}\Tr[\rho_i E_j]=\Tr[\rho_i],
\end{equation}
which is independent on the choice of observation $\{E_j\}_{j\in\rY}$.
\end{proof}
\medskip
Notice the asymmetry between preparations and observations. Indeed, differently from Eq. (\ref{maracas}), the marginal probability of  observation is
\begin{equation}
\sum_{i\in\rX}p(i,j)=\sum_{i\in\rX}\Tr[\rho_iE_j]=\Tr[\rho_\rX E_j]
\end{equation}
which generally depends on the preparation $\{\rho_i\}_{i\in X}$ through $\rho_\rX$.

One can easily realise that for a general OPT an equivalent form of the causality postulate is the following.

\begin{itemize}
\item[]{\bf Causality principle (equivalent form):} An OPT satisfies the causality postulate iff the deterministic effect is unique, and depends only on the system of the theory.
\end{itemize}
Indeed, as seen in the proof of Theorem \ref{the:QTcausal}, taking the marginal probability of preparation corresponds to coarse-grain the observation test into a deterministic effect. Therefore, independence of deterministic effect on the observation test is equivalent to independence of the marginal probability on the choice of the observation test. 

In the book \cite{CUPDCP} it is also shown how the causality principle is also equivalent to the possibility of normalising states, which in turn is equivalent to the possibility of preparing deterministically any state via post-selection.\footnote{As remarked in \cite{d2013determinism}, if the theory is not causal there exist states which cannot be prepared with certainty. In other words, the post-selection procedure is not available.} Finally, it is shown how the causality postulate restricts the convex structure of the theory, e.g. with the convex set of states shaped as an hyperplane-truncated cones and those of effects shaped as spindlers.

\subsection{Causality is a partial ordering between events}
The following definition establishes a partial ordering among events in a circuit.
\begin{definition}[Partial ordering between events]\label{def:porder}
For two events $\tA$ and $\tB$ in the same circuit, we say that  {\em event $\tA$ precedes event $\tB$} if there is a undirected input-output path of the circuit that connects an output system of  $\tA$ with an input system of $\tB$. We equivalently say that that {\em $\tB$ follows $\tA$}. 
\end{definition}
 Definition \ref{def:porder} holds equally well by substituting the word ``event'' with ``test''. It introduces the notions of {\em input} and {\em output cones} of a given event (test). 
 \begin{definition}[Input and output cones of an event] The {\em input cone of event $\tA$} is the set of events preceding $\tA$.
 Analogously the {\em output cone of event $\tA$} is the set of events following $\tA$. 
 \end{definition}
 In the presence of the Causality assumption in a general OPT, the input and output causal cones are the respective {\em causal cones} (figure\ref{f:ccones}) .
  Moreover, if we associate the input-output direction with the {\em arrow of time}, the two cones become the equivalent of the {\em past cone} and {\em future cone}, respectively. 
\begin{figure}[h]
     \centering
    \begin{subfigure}[t]{0.58\textwidth}
        \raisebox{0pt}{\includegraphics[width=\textwidth]{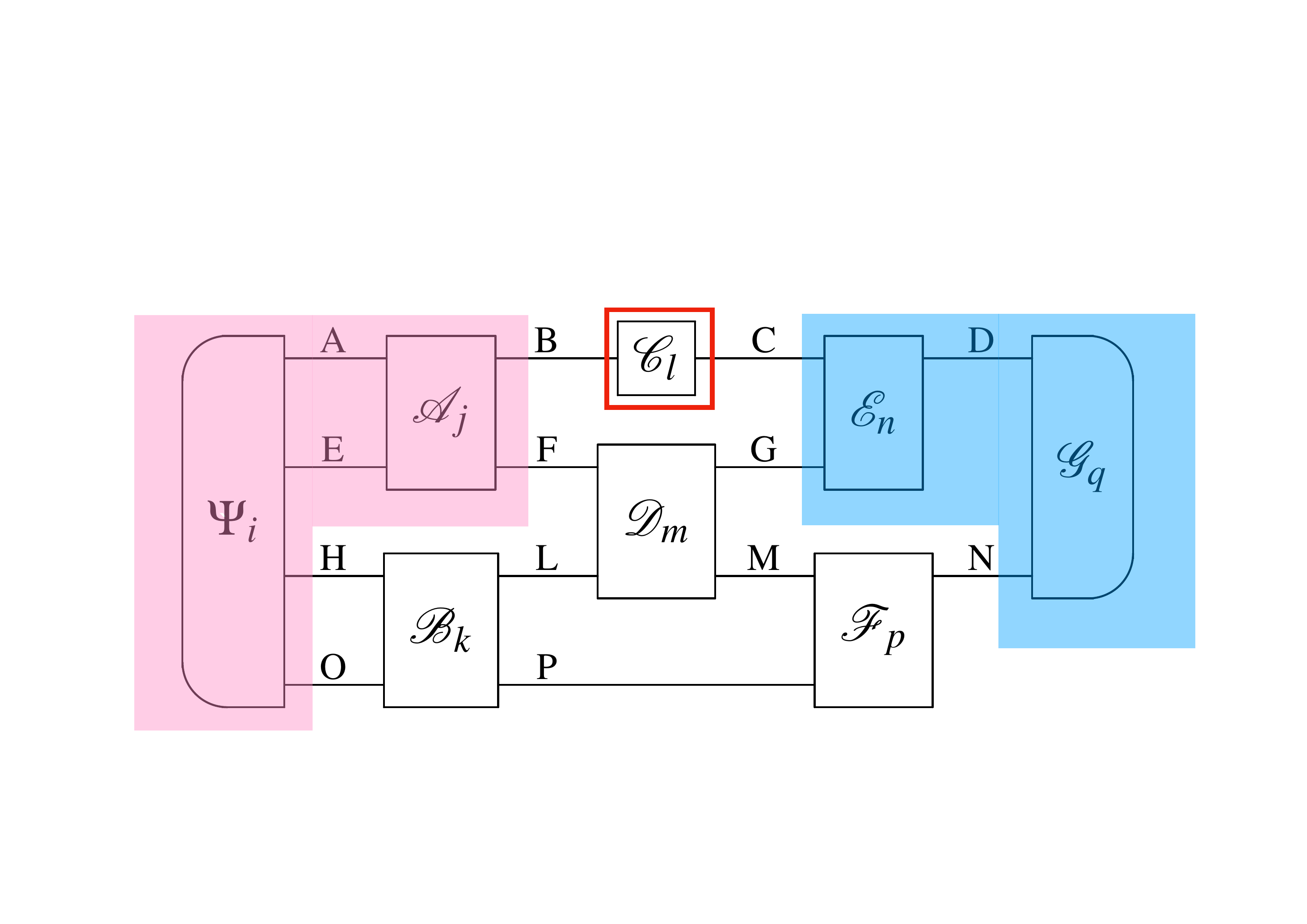}}
\caption{Past (pink) and future (blue) causal cones of event $\tC_l$ in the circuit of Fig. \ref{f:EQU}.}
    \end{subfigure}
    \hfill
    \begin{subfigure}[t]{0.36\textwidth}\hspace{1em}
        \raisebox{20pt}{\includegraphics[width=\textwidth]{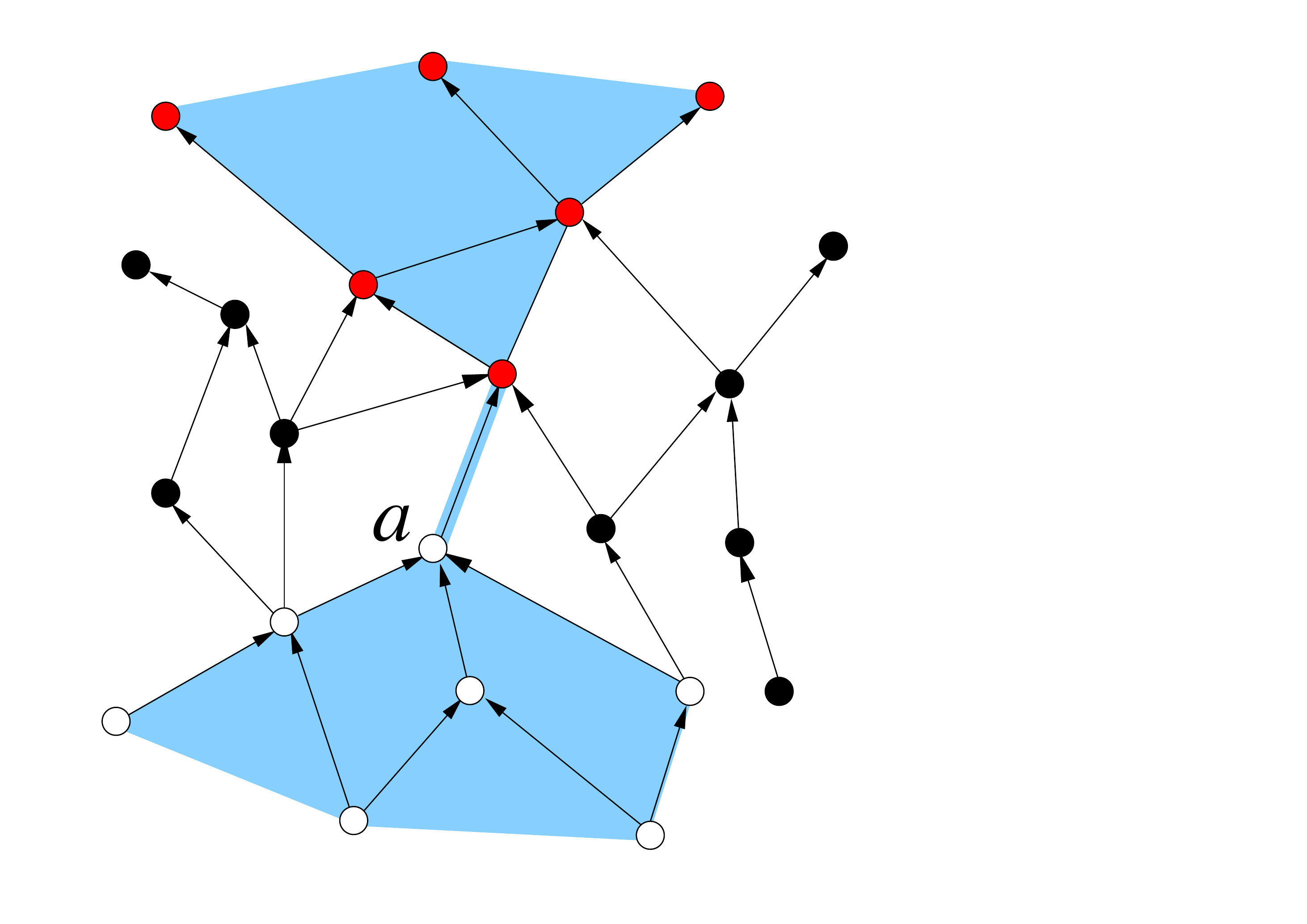}}
       \caption{Past and future causal cone the event $a$ in customary causal-network representation.}
    \end{subfigure}
    \caption{Causal cones of an event.}
    \label{f:ccones}
\end{figure}
 
 \subsection{Causality in Special Relativity}
With the above notions in mind, it is now immediate to see that the causality principle is equivalent to the following statement
\begin{corollary}[No-signaling from the future]\label{c:caus} An OPT is causal iff the marginal probability of any test is independent on the choice of any test that does not belong to its past cone.
\end{corollary}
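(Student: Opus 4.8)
The plan is to prove the two implications of the biconditional separately, using throughout the \emph{equivalent form} of the causality principle established above: an OPT is causal if and only if the deterministic effect of each system is unique and depends only on that system. The reverse implication is the immediate one, so I would dispatch it first and then invest the real work in the forward implication, where the graph structure of the circuit must be brought into play.

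For the reverse direction (no-signaling from the future $\Rightarrow$ causality) I would argue as follows. A preparation event has trivial input system $\rI$, so by Definition \ref{def:porder} no event can precede it and its past cone is empty. Consequently \emph{every} observation test $\{E_j\}_{j\in\rY}$ lies outside the past cone of any preparation test $\{\rho_i\}_{i\in\rX}$. Applying the hypothesis to this pair, the marginal probability of the preparation, $\sum_{j\in\rY}p(i,j)$, is independent of the choice of $\{E_j\}_{j\in\rY}$; but this is exactly the statement that the probability of preparations is independent of the choice of observations, i.e.\ the causality principle. Hence no-signaling from the future entails causality.

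For the forward direction (causality $\Rightarrow$ no-signaling from the future) I would fix a test $\tA$ and partition all remaining tests of the circuit into those lying in its past cone and those lying outside it. The first step is a purely order-theoretic lemma: since precedence is transitive and the circuit is acyclic, no test outside the past cone can precede $\tA$ or any test inside the past cone, so there is a topological ordering in which the past-cone tests come first, then $\tA$, then every remaining test. Cutting the circuit immediately after $\tA$ then splits it into an upstream part, which produces a (sub-normalised) state on the open wires of the cut depending only on $\tA$ and its past cone, and a downstream closed sub-circuit consisting of precisely the tests outside the past cone, including every terminal observation. Marginalising over the outcomes of these downstream tests coarse-grains the sub-circuit to a deterministic effect on the cut system, which by the equivalent form of the causality principle is the \emph{unique} such effect and hence is insensitive to which tests were chosen outside the past cone. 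Summing over the remaining upstream outcomes preserves this insensitivity, yielding the marginal of $\tA$ as a quantity independent of every test outside its past cone.

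The main obstacle is the forward direction, precisely the step from ``the downstream tests coarse-grain to \emph{a} deterministic effect'' to ``they coarse-grain to \emph{the} deterministic effect''. This is exactly where causality is indispensable: in a non-causal theory distinct downstream tests could coarse-grain to different deterministic effects and so alter the marginal of $\tA$. Making the argument watertight requires the order-theoretic lemma above, so that the downstream tests really do form a closed sub-circuit acting on a cut whose state cannot depend on them; some care is also needed to confirm that the terminal observations always lie outside the past cone, which holds because an observation has trivial output and therefore precedes nothing.
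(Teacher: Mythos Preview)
Your proposal is correct and follows essentially the same approach as the paper: for the reverse direction both you and the paper observe that the causality principle is simply the special case where $\tA$ is a preparation, and for the forward direction both arguments split the closed circuit into a ``preparation'' part containing $\tA$ together with its past cone and an ``observation'' part containing the test $\tB$ outside it, then invoke causality (in your case via the equivalent unique-deterministic-effect form). The paper's own proof is much terser --- it simply cites Fig.~\ref{f:EQU} for the split --- whereas you supply the topological-ordering justification that makes that split legitimate; this extra scaffolding is welcome but does not constitute a different strategy.
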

\begin{proof}
The statement is equivalent to say that the OPT is causal, if  and only if for any test $\tA$ that does not follow a test $\tB$, one has that the marginal probability distribution of test $\tA$ is independent of the choice of test $\tB$. This statement implies the causality principle, since the latter is a special case. The reverse implication is also true, since as we have seen, any closed circuit can be always split into a preparation and an observation test, with the preparation containing test $\tA$ and the observation containing test $\tB$.
\end{proof}
An tautological alternative statement of Corollary \ref{c:caus} is the following
\begin{corollary}[No-signaling from the future]\label{c:caus2} An OPT is causal iff the marginal probability of any event is independent on which events can occur outside its past cone.
\end{corollary}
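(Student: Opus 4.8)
The plan is to establish the logical equivalence between the statement of Corollary \ref{c:caus2} and that of Corollary \ref{c:caus}, from which the result follows at once, since Corollary \ref{c:caus} has already been shown equivalent to causality. The whole argument is essentially a dictionary translation between the language of \emph{tests} and the language of \emph{events}, together with the observation that ``outside the past cone'' and ``not belonging to the past cone'' name the same set of nodes of the circuit.

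First I would recall the relationship between events and tests: a test $\{\tB_j\}_{j\in\rY}$ is a complete collection of mutually exclusive events, and the marginal probability of any individual event $\tA_i$ is obtained from the joint probability of the full circuit by coarse-graining, i.e. summing, over the outcomes of every other test. Consequently the marginal probability of the whole test $\{\tA_i\}_{i\in\rX}$ is nothing but the tuple $(p(i))_{i\in\rX}$ of marginal probabilities of its constituent events. This gives a two-way passage: independence of the marginal of each event $\tA_i$ from some external choice is equivalent to independence of the marginal of the test $\{\tA_i\}_{i\in\rX}$ from that same choice, since the test marginal is assembled component-wise from the event marginals and, conversely, each event marginal is one component of the test marginal.

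Second I would identify the two notions of ``external choice'' appearing in the two statements. Selecting a test at a node located outside the past cone of $\tA_i$ is precisely the act of specifying \emph{which events can occur} there, so ``the choice of any test that does not belong to its past cone'' in Corollary \ref{c:caus} and ``which events can occur outside its past cone'' in Corollary \ref{c:caus2} name the same freedom. Since, by Definition \ref{def:porder}, the past cone is a fixed set of nodes determined by the circuit, ``does not belong to'' and ``lies outside'' are synonymous, and the quantifier over tests not in the past cone matches the quantifier over events that can occur outside it. Putting these two translations together shows that the defining condition of Corollary \ref{c:caus2} is a verbatim rephrasing of that of Corollary \ref{c:caus}, hence both hold exactly when the OPT is causal.

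The only point requiring any care --- and the nearest thing to an obstacle in an otherwise tautological argument --- is the first step: one must check that coarse-graining genuinely commutes with the independence claim, i.e. that independence holding for every event of a test transfers to independence of the summed, test-level marginal, and back. This is immediate because summation is linear and the summed index ranges over the node \emph{inside} the past cone whose marginal we are computing, not over the external nodes, so the external choice passes through untouched.
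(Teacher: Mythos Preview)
Your proposal is correct and matches the paper's own treatment: the paper does not give a separate proof but simply introduces Corollary~\ref{c:caus2} as ``a tautological alternative statement of Corollary~\ref{c:caus}'', which is exactly the equivalence you spell out. Your elaboration of the test/event dictionary and the coarse-graining remark merely makes explicit what the paper leaves implicit in the word ``tautological''.
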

In the formulation of  Corollary \ref{c:caus2} causality has exactly the form used in special relativity, within a deterministic scenario in a classical case. Indeed causality is the existence of a partial ordering between events, which defines a ``preceding'' (and ``following'') cone, and, viceversa, the cone can be used to define the partial ordering. Thus logically the causality notion of OPTs (and hence of quantum theory) is the same notion that we use in special relativity, although without the need of introducing the Minkowski structure. 

\subsection{Causality is falsifiable science}
We see now that causality, being a theorem of quantum theory, is also experimentally falsifiable, namely it allows us to formulate predictions that are falsifiable. Therefore, according to the Popper's demarcation criterion\cite{popp-logi}, causality is in all respects scientific theory. 

An experiment for falsifying causality can be designed simply by building up a cascade of two quantum measurements, e.g. two Stern-Gerlach apparata with pinhole-detection of particles, with the possibility of changing the second apparatus at the output, .e.g. changing the orientation of the magnetic field gradient. Two apparata in cascade correspond to the sequential composition $\{\tB_j\tA_i\}_{(i,j)\in\rX\times\rY}\subset\Trn(\rA\to\rC)$ of the two tests $\hat\tA=\{\tA_i\}_{i\in\rX}\subset\Trn(\rA\to\rB)$ and $\hat\tB:=\{\tB_j\}_{j\in\rY}\subset\Trn(\rB\to\rC)$. Here test $\hat\tA$ belongs to the past cone of test $\hat\tB$, and viceversa test $\hat\tB$ belongs to the future cone of test $\hat\tA$. Using the rules of quantum theory we have the causality relations
\begin{align*}
&p_\tA(i):=\sum_{j\in\rY}p(i,j)=\sum_{j\in\rY}\Tr[\tB_j\tA_i(\rho)]=\Tr[\tB_\rY\tA_i(\rho)]=\Tr[\tA_i(\rho)],
\text{ independent on test }\{\tB_j\}_{j\in\rY}&\\
&p_\tB(j):=\sum_{i\in\rX}p(i,j)=\sum_{i\in\rX}\Tr[\tB_j\tA_i(\rho)]=\Tr[\tB_j\tA_\rX(\rho)],
\text{ generally dependent on test }\{\tA_i\}_{i\in\rX},&
\end{align*}
where we used the fact that the coarse graining $\tB_\rY:=\sum_{j\in\rY}\tB_j$ is trace preserving (true also for $\tA_\rX$). One can now consider $\rX=\{0,1\}$ with 
$\Tr[\tA_0(\rho)]=0$. Then, for example, the causality principle would be falsified if upon changing test $\hat\tB$ to a different test  $\hat\tB'$, one would have occurrence of the outcome $0$ for test $\hat\tA$.
\subsection{From Cinderella to Principle of Physics}
The causality principle that here we have derived as a theorem of quantum theory, in Refs. \cite{purification,QUIT-Arxiv} has been used instead as the first of six information-theoretical postulates to derive quantum theory in finite dimensions (a didactical derivation can be found in the textbook \cite{CUPDCP}). Recognising causality as a postulate for the theory has allowed for the first time to consider variations of the theory that do not satisfy it. This has spawn a full research line about a non-causal variations of quantum theory, ranging from the theory of {\em quantum combs} \cite{PhysRevA.88.022318} (which still has an interpretation in terms of an underlying causal theory) to theories with dynamical causal ordering 
\cite{Oreshkov:2012jc,Brukner:2014if,Brukner:2014vo,Brukner:2014vo,Baumeler:2015te}, and experimental tests have been also devised \cite{2015NatCo...6E7913P,Rambo}.

\section{Analysis of the causality-denial sindrome in classical physics}
We have seen the statement of the causality principle in terms of the independence of the marginal probability of preparation-tests on the choice of observation-tests, or, shortly: independence of preparations on observations. We argued that this statement distills all the guises in which causality appears in physics. We have seen that the principle is a theorem of quantum theory, but also classical theory must satisfy the principle, since it is a restriction of quantum theory. Then, why it seems that there is no causality principle in classical physics, as Carrol asserts? 

The answer is that within classical theory the principle of causality is trivialised  in many ways by the ``realism of the theory'', i.e. the one-to-one correspondence between elements of reality and elements of the theory. On one side, restricting to a deterministic scenario put causes and effects (preparations and observations) in one-to-one correspondence, thus trivialising their distinction. On the other side, classical theory makes the notion of  ``measurement'' irrelevant, since it is identified with the reading of the state, hence identifying ``state'' and ``measurement outcome''. Finally, in classical physics there is only one possible observation, whereas in quantum theory inequivalent observations are introduced by complementarity, and this breaks the classical identification between measurement and state. 

What then causality means in the classical case? Although in a classical world there is no choice of observation, yet one can chose the preparation,\footnote{The hypothesis of {\em superdeterminism} is discarded as methodologically non scientific, since the same falsifiability of a physical law requires the independent variation of physical parameters in experiments.}
 and this makes the formulation of Corollary \ref{c:caus2} appropriate. And, indeed, we have seen that this statement is shared by special relativity, namely it is the usual Einsteinian notion. With the same argument Corollary \ref{c:caus2} requires discarding advanced potentials in electromagnetism. 

Finally, it is worth recalling that causality is a genuinely {\em theoretical} notion, since it relies on a theory of connections among events, namely the connections are established by an underlying theoretical substratum. Thus, the Minkowski space-time structure  {\em per se} has no bearing on causal connections, but it is  an underlying theory, e.g. Maxwell theory, which establishes which events can be connected and which cannot. 

\section{Conclusions}
Causality establishes an arrow of time, going from preparation to observation. 
This corresponds to the usual arrow of time, the one that we use in our everyday experience. By denying causality, Sean Carrol denies such arrow of time, against the everyday experience. He writes\cite{bigpict}:
\\\par\noindent {\em The reason why there's a noticeable distinction between up and down for us isn't because of the nature of space; it's because we live in the vicinity of an extremely influential object: the Earth. Time works the same way. In our everyday world, time's arrow is unmistakable, and you would be forgiven for thinking that there is an intrinsic difference between past and future.} 
\\\par In order to reconcile experience, he then invokes the Past Hypothesis of David Albert 
\\\par\noindent {\em In reality, both directions of time are created equal. The reason why there's a noticeable distinction between past and future isn't because of the nature of time; it's because we live in the aftermath of an extremely influential event: the Big Bang.
 \ldots
The thing we need to add is an assumption about the initial condition of the observable universe, namely, that it was in a very low-entropy state. Philosopher David Albert has dubbed this assumption the Past Hypothesis.
\ldots
What we know is that this initially low entropy is responsible for the  ``thermodynamic''  arrow of time,\ldots}
\\\par The idea that there is no distinction between past, present, and future was of Einstein himself, who wrote in a letter to the family of Michele Besso, few days after Einstein learned of his death:
\\\par\noindent {\em Now he has departed from this strange world a little ahead of me. That means nothing. People like us, who believe in physics, know that the distinction between past, present and future is only a stubbornly persistent illusion.}\cite{Einstein-isaacson}
\\\par Einstein here denies the existence of a time-arrow. However, one should not forget that he has always been a stubborn opponent of quantum theory in his battle with Bohr. But  everybody nowadays would agree that who lost the battle was Einstein.
\dataccess{For a recent review on the causality postulate, see Chapt. 5 of the book \cite{CUPDCP}.}
\competing{I declare I have no competing interests.}
\funding{This work was made possible through the support of a grant from the John Templeton Foundation (Grant ID\# 43796: {\em A Quantum-Digital Universe}, and Grant ID\# 60609: {\em Quantum Causal Structures}). The opinions expressed in this publication are those of the author and do not necessarily reflect the views of the John Templeton Foundation. The manuscript has been conceived and drafted at the Kavli Institute for Theoretical Physics, University of California, Santa Barbara, and supported in part by the National Science Foundation under Grant No. NSF PHY11-25915.} 
\ack{The main purpose of the present paper is that of making the scientific community aware of the existence of a falsifiable causality principle holding for the whole  contemporary physics. The paper is based on previous results, mostly in collaboration with P. Perinotti and G. Chiribella, published in the book \cite{CUPDCP} and in the original papers 
\cite{purification,QUIT-Arxiv,d2013determinism}. In the present paper causality is derived as a theorem of quantum theory, whereas in the mentioned works it is a postulate for the theory. A preliminary version of the causality postulate has been given in Ref.~\cite{PhilosophyQuantumInformationEntanglement2} rephrased as 
``no-signaling from the future'', of which the first embryo can be find in a footnote in Ref.~\cite{asd}.}

\bibliographystyle{iopart-num}
\bibliography{Causality-Phyl-TransA}
\end{document}